\documentclass[letterpaper, 10 pt, conference]{ieeeconf}
\usepackage{url}
\usepackage{graphicx}
\usepackage[format=plain,font=footnotesize,labelfont=bf,labelsep=period]{caption}
\usepackage{float}
\usepackage{bm}

\usepackage{amsmath}
\usepackage{amssymb}
\usepackage{amsthm}
\usepackage{mathtools}
\usepackage{color}
\usepackage{setspace}
\usepackage{cite}
\usepackage{multirow}

\newtheorem{theorem}{Theorem}

\newtheorem{lemma}{Lemma}
\theoremstyle{definition}
\newtheorem{definition}{Definition}
\newtheorem{remark}{Remark}
\theoremstyle{definition}
\newtheorem{assumption}{Assumption}
\newtheorem{proposition}{Proposition}

\DeclareMathOperator*{\argmin}{argmin}

\newcommand{\x}{\mathbf{x}}           
\newcommand{\uu}{\mathbf{u}}           
\newcommand{\uh}{\mathbf{u}_t}        
\newcommand{\vv}{\mathbf{v}}          
\newcommand{\qq}{\mathbf{q}}          
\newcommand{\xp}{\mathbf{x}_{\mathrm{p}}}   
\newcommand{\xph}{\hat{\x}_{\mathrm{p}}}   
\newcommand{\f}{\mathbf{f}}           
\newcommand{\phib}{\bm{\phi}}     
\newcommand{\bb}{\mathbf{b}}           
\newcommand{\kd}{\mathbf{k}_{\mathrm{d}}}        
\newcommand{\Scal}{\mathcal{S}}
\newcommand{\Se}{\mathcal{S}_e}
\newcommand{\Sx}{\mathcal{S}_x}
\newcommand{\Su}{\mathcal{S}_u}
\newcommand\norm[1]{\left\lVert#1\right\rVert}

\def\BibTeX{{\rm B\kern-.05em{\sc i\kern-.025em b}\kern-.08em
T\kern-.1667em\lower.7ex\hbox{E}\kern-.125emX}}

\IEEEoverridecommandlockouts
\allowdisplaybreaks
\begin{document} 

\title{\LARGE \bf 
Integral Control Barrier Functions with Input Delay: \\ Prediction, Feasibility, and Robustness 
}

\author{
  Adam K. Kiss, Ersin Da\c{s}, Tamas G. Molnar, and Aaron D. Ames
  \thanks{Adam K. Kiss is with the Department of Applied Mechanics, Faculty of Mechanical Engineering, Budapest University of Technology and Economics, Budapest, Hungary, {\tt\small kiss\_a@mm.bme.hu}.}%
  \thanks{Ersin Da\c{s} is with the Department of Mechanical, Materials, and Aerospace Engineering, Illinois Institute of Technology, Chicago, IL 60616, USA, {\tt\small edas2@illinoistech.edu}.}
  \thanks{Tamas G. Molnar is with the Department of Mechanical Engineering, Wichita State University, Wichita, KS 67260, USA, {\tt\small tamas.molnar@wichita.edu}.}%
  \thanks{Aaron D. Ames is with the Department of Mechanical and Civil Engineering, California Institute of Technology, Pasadena, CA 91125, USA, {\tt\small ames@caltech.edu}.}
  }

\maketitle
\thispagestyle{empty}

\begin{abstract}
Time delays in feedback control loops can cause controllers to respond too late, and with excessively large corrective actions, leading to unsafe behavior (violation of state constraints) and controller infeasibility (violation of input constraints).
To address this problem, we develop a safety-critical control framework for nonlinear systems with input delay using dynamically defined (integral) controllers.
Building on the concept of {\em Integral Control Barrier Functions} (ICBFs), we concurrently address two fundamental challenges: compensating the effect of delays, while ensuring feasibility when state and input constraints are imposed jointly. 
To this end, we embed predictor feedback into a dynamically defined control law to compensate for delays, with the predicted state evolving according to delay-free dynamics. 
Then, utilizing ICBFs, we formulate a quadratic program for safe control design.
For systems subject to simultaneous state and input constraints, we derive a closed-form feasibility condition for the resulting controller, yielding a compatible ICBF pair that guarantees forward invariance under delay. 
We also address robustness to prediction errors (e.g., caused by delay uncertainty) using tunable 
robust ICBFs.
Our approach is validated on an adaptive cruise control example with actuation delay.
\end{abstract}


\section{Introduction}
\label{sec:intro}

Control Barrier Functions (CBFs)~\cite{ames2017cbf} provide a principled approach to safety-critical control by enabling the design of controllers that enforce the forward invariance of a safe set in the state space.
This is typically implemented through safety filters, formulated as a quadratic program (QP) that minimally modifies a given nominal controller.
This framework has been applied to autonomous vehicles, robotics, and multi-agent systems~\cite{ames2019control, jiang2024robust, xiong2025robust, harms2025safe}.

One of the key challenges in applying CBFs to real systems is input delay.
Actuation, computation, and communication all introduce delays that cause the commanded input to take effect only after some time.
A controller unaware of this delay evaluates the CBF condition at the current state, while the effect of the input will only be realized later---this mismatch can cause safety violations.
Predictor feedback~\cite{Krsticbook2008} addresses this problem by forward-integrating the system over the delay interval to obtain the predicted state at which the CBF condition is evaluated~\cite{molnar2022safety,abel2024constrained}. 
In practice, however, the delay is often uncertain,
which may induce a prediction error that acts as a disturbance.
Robustness to this error can be achieved through integral quadratic constraints~\cite{seiler2021control,quan2023tube}, input-to-state safety~\cite{molnar2022input}, or robust CBFs~\cite{nanayakkara2025safety,dacs2026safe}, while online delay estimation may further reduce conservatism~\cite{kim2025minimizing}.

The standard CBF framework~\cite{ames2017cbf} considers
state constraints only,
which limits its applicability when input constraints (such as actuator bounds) or actuator dynamics are present.
This is especially important for systems with delays, because a delayed response may become more excessive with higher inputs.
Integral Control Barrier Functions (ICBFs)~\cite{ames2020integral} overcome input constraints by allowing the barrier function to depend on both the system state and the control input, so that input constraints and actuator dynamics can be encoded alongside the state constraint~\cite{dacs2024rollover}. 
ICBFs, however, have not yet been adapted to time delay systems.

Furthermore, ICBFs present another challenge.
For multiple constraints (both state and input limits), the combined QP may be infeasible even if each constraint individually admits a solution.
This phenomenon is well-known for CBF-based safety filters \cite{mousavi2025vertices}: when constraints are incompatible, the QP may have no solution, or undesired equilibria may appear on the safe set boundary~\cite{tan2024undesired}. 
The compatibility (or joint feasibility) of constraints has been studied for intersections of safe sets~\cite{molnar2023composing,cohen2025compatibility} and for explicit safety filter formulations~\cite{mestres2025explicit}.
Deriving a closed-form feasibility condition for ICBF-based safety filters with joint state and input constraints is still missing from the literature.

In this paper, we develop an ICBF framework for nonlinear control systems with input delay and joint state and input constraints.
First, we incorporate predictor feedback into the ICBF framework so that the safety filter operates on the predicted state, compensating for the delay.
Second, we derive a closed-form feasibility condition for the two-constraint QP arising from joint state and input limits,
constituting a compatible ICBF pair that guarantees feasibility and forward invariance under delay.
Third, we handle uncertain delay via a robust ICBF formulation based on~\cite{nanayakkara2025safety,dacs2026safe} that treats prediction errors as a disturbance, yielding safety guarantees that improve with prediction accuracy. We illustrate the framework on a safe adaptive cruise control example.

The rest of the paper is organized as follows.
Section~\ref{sec:background} revisits the ICBF framework and the two-constraint setup. 
Section~\ref{sec:icbfd} presents the main results: the predictor-based ICBF formulation, the feasibility condition, and tunable 
robustification. 
Section~\ref{sec:example} provides the adaptive cruise control simulation. 
Section~\ref{sec:conclusion} closes with conclusions.

\section{Background}
\label{sec:background}

We build on the framework of Control Barrier Functions (CBFs)~\cite{ames2017cbf,ames2019control} and Integral CBFs~\cite{ames2020integral}. We consider a general, not necessarily control-affine, nonlinear system:
\begin{equation}
  \dot{\x}(t) = \f(\x(t),\, \uu(t)),
  \label{eq:plant_nodelay}
\end{equation}
with continuously differentiable ${\f \!:\! \mathbb{R}^n \!\times\! \mathbb{R}^m \!\to\! \mathbb{R}^n}$. Throughout the paper, we omit the explicit time dependence when it is clear from the context for simplicity.

CBFs guarantee the forward invariance of the safe set 
\begin{equation*}
    \mathcal{C} := \{\x \in \mathbb{R}^n :h(\x)\ge 0\},
\end{equation*}
where the CBF
${h: \mathbb{R}^n \to \mathbb{R}}$
satisfies the condition
\begin{equation*}
\sup_{\uu\in\mathbb{R}^m}\left[\frac{\partial h}{\partial\x}(\x)\f(\x,\uu)+\alpha(h(\x))\right]\ge 0
  , \ \forall x \in \mathcal{C},
\end{equation*}
with an extended class-$\mathcal{K}$ function\footnote{Function  ${\alpha : [0,a) \to \mathbb{R}}$, ${a>0}$, is of class-$\mathcal{K}$ (${\alpha \in \mathcal{K}}$) if it is continuous, strictly increasing, and ${\alpha(0)=0}$.
Function ${\alpha : (-b,a) \to \mathbb{R}}$, ${a,b>0}$ is of extended class-$\mathcal{K}$ (${\alpha \in \mathcal{K}^{\rm e}}$) if it has the same properties.} ${\alpha \!\in\! \mathcal{K}^{\rm e}}$.
ICBFs extend this construction to dynamically defined control laws.
\subsection{Integral Control Barrier Functions}
\label{ssec:icbf}
We revisit the ICBF framework \cite{ames2020integral} for system  \eqref{eq:plant_nodelay}. 
Rather than an algebraic feedback controller ${\uu = \mathbf{k}(\x)}$, we consider a \emph{dynamically defined (integral) controller}:
\begin{equation}
  \dot{\uu}(t) = \phib(\x(t),\, \uu(t)) + \vv(t),
  \label{eq:ctrl_nodelay}
\end{equation}
where $\phib : \mathbb{R}^n \times \mathbb{R}^m \to \mathbb{R}^m$ is the continuously differentiable integral controller and
${\vv \in \mathbb{R}^m}$
is an auxiliary corrective input to be determined by the safety filter. For example, with a nominal controller ${\kd \!:\! \mathbb{R}^n \!\to\! \mathbb{R}^m}$, a common choice is \cite{ames2020integral}:
\begin{equation}
  \phib(\x, \uu) =
    \frac{\partial \kd}{\partial \x}(\x)\,\f(\x,\uu)
    + \frac{\alpha_\phi}{2}\left(\kd(\x) - \uu\right),
  \label{eq:phi_tracking}
\end{equation}
for $\alpha_\phi > 0$, which ensures that $\uu$ approaches $\kd(\x)$ exponentially along the trajectories of \eqref{eq:plant_nodelay}--\eqref{eq:ctrl_nodelay} with $\vv \equiv \mathbf{0}$.


We define safety using a continuously differentiable, input-dependent safety function ${h : \mathbb{R}^n \times \mathbb{R}^m \to \mathbb{R}}$ on the augmented state $(\x, \uu)$.
We assume that zero is a regular value of $h$, and consider
the associated safe set:
\begin{equation}
  \Scal := \{(\x,\uu) \in \mathbb{R}^n \times \mathbb{R}^m : h(\x,\uu) \geq 0\}.
  \label{eq:safeset}
\end{equation}

\begin{definition}[\!\!\cite{ames2020integral}]
\label{def:icbf}
Function $h:\mathbb{R}^n\times\mathbb{R}^m\to\mathbb{R}$ is an \emph{Integral Control Barrier Function} for
\eqref{eq:plant_nodelay} with the controller \eqref{eq:ctrl_nodelay} if there exists $\alpha\in\mathcal{K}^{\rm e}$ such that for all $(\x,\uu)\in\mathbb{R}^n\times\mathbb{R}^m$:
\begin{equation}
  \sup_{\vv\in\mathbb{R}^m}
  \left[\dot{h}(\x,\uu,\vv) + \alpha\big(h(\x,\uu)\big)\right] \ge 0.
  \label{eq:icbf_cond}
\end{equation}
\end{definition}

Let us define:
\begin{align}
  \!\!\bb(\x,\!\uu)  & \!:=\! \left ( \tfrac{\partial h}{\partial \uu}(\x,\!\uu) \right )^\top,
  \label{eq:ab_nodelay} \\[4pt]
   \!\!a(\x,\!\uu)  & \!:=\!
  - \tfrac{\partial h}{\partial \x}(\x,\!\uu) \f(\x,\!\uu)
  \!-\! \tfrac{\partial h}{\partial \uu}(\x,\!\uu) \phib(\x,\!\uu)
  \!-\! \alpha\big(h(\x,\!\uu)\big), \nonumber
\end{align}
so that
$\dot{h}(\x,\uu,\vv) + \alpha\big(h(\x,\uu)\big) \!=\! -a(\x,\uu)+\bb^\top \! (\x,\uu)\vv$. 
Condition~\eqref{eq:icbf_cond} is equivalent to the point-wise requirement:
\begin{equation}
  \bb(\x,\uu) = \mathbf{0} \;\implies\; a(\x,\uu) \leq 0,
  \ \forall\,(\x,\uu)\in\mathbb{R}^n\times\mathbb{R}^m.
  \label{eq:icbf_rd}
\end{equation}

\begin{theorem}[\!\!\!\!\cite{ames2020integral}]
\label{thm:icbf}
If $h$ is an ICBF for \eqref{eq:plant_nodelay} with the controller \eqref{eq:ctrl_nodelay}, then any locally Lipschitz controller ${\vv=\qq(\x,\uu)}$, with ${\qq : \mathbb{R}^n \times \mathbb{R}^m \to \mathbb{R}^m}$, satisfying:
\begin{equation}
  \bb^\top \! (\x,\uu) \vv \geq a(\x,\uu)
  \label{eq:icbf_safety_cond}
\end{equation}
for all ${(\x,\uu)\in\Scal}$,
renders $\Scal$ forward invariant (safe), i.e., ${(\x(0),\uu(0)) \in \Scal \implies (\x(t),\uu(t))\in \Scal}$, ${\forall\, t\ge 0}$.
\end{theorem}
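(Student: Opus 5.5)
The plan is to reduce the statement to the standard CBF forward-invariance result on the augmented state. Stack $\mathbf{z}:=(\x,\uu)\in\mathbb{R}^{n+m}$. With $\vv=\qq(\x,\uu)$, the closed loop \eqref{eq:plant_nodelay}--\eqref{eq:ctrl_nodelay} becomes the autonomous system
\begin{equation*}
  \dot{\mathbf{z}} = \mathbf{F}(\mathbf{z}) := \begin{bmatrix} \f(\x,\uu) \\ \phib(\x,\uu)+\qq(\x,\uu)\end{bmatrix}.
\end{equation*}
Since $\f$ and $\phib$ are continuously differentiable and $\qq$ is locally Lipschitz, $\mathbf{F}$ is locally Lipschitz. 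For every initial condition there is therefore a unique solution $\mathbf{z}(t)$ on a maximal interval $[0,t_{\max})$. Forward invariance is understood on this interval; if $t_{\max}=\infty$, as is implicitly assumed, this gives the claim for all $t\ge 0$.

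Along solutions, the chain rule and the definitions \eqref{eq:ab_nodelay} give
\begin{equation*}
  \dot h = \tfrac{\partial h}{\partial \x}\f + \tfrac{\partial h}{\partial \uu}(\phib+\qq) = -a + \bb^\top\qq - \alpha(h).
\end{equation*}
Condition \eqref{eq:icbf_safety_cond} then yields $\dot h(t)\ge -\alpha(h(t))$ whenever $\mathbf{z}(t)\in\Scal$. The ICBF property from Definition~\ref{def:icbf} (equivalently \eqref{eq:icbf_rd}) only guarantees that such a $\qq$ can exist. The invariance argument itself uses only \eqref{eq:icbf_safety_cond}.

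The main obstacle is that \eqref{eq:icbf_safety_cond} is assumed only on $\Scal$, so a comparison lemma cannot be applied naively on all of $[0,t_{\max})$. I would argue by contradiction. Suppose $h(\mathbf{z}(t_1))<0$ for some $t_1$, and let $t_0:=\sup\{t\le t_1: h(\mathbf{z}(t))\ge 0\}$. By continuity $h(\mathbf{z}(t_0))=0$ and $h<0$ on $(t_0,t_1]$.

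To handle this, I would first extend the inequality to a neighbourhood. Near the point $\mathbf{z}(t_0)\in\partial\Scal$, the function $\beta(\mathbf{z}):=\bb^\top\qq - a$ is continuous. If $\beta(\mathbf{z}(t_0))>0$, then $\dot h(t_0)=\beta-\alpha(0)>0$, which contradicts $h$ becoming negative immediately after $t_0$. The delicate case is $\beta(\mathbf{z}(t_0))=0$. For this I would use a standard trick: run the argument for the modified condition with $\alpha$ replaced by $\alpha-\varepsilon$, or equivalently consider $h_\varepsilon=h+\varepsilon$. Alternatively, I would invoke Nagumo's theorem directly. Because zero is a regular value of $h$, $\Scal$ is a closed set with boundary $\{h=0\}$ and the tangent cone at a boundary point is $\{\mathbf{w}:\nabla h\cdot\mathbf{w}\ge 0\}$. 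On the boundary, $\nabla h\cdot\mathbf{F}=\beta-\alpha(0)\ge 0$ by \eqref{eq:icbf_safety_cond}. Nagumo's theorem, which applies since $\mathbf{F}$ is locally Lipschitz and solutions are unique, then gives forward invariance of $\Scal$. I would take this Nagumo route as the primary argument, citing it as in \cite{ames2017cbf}, with the regular-value assumption supplying the tangent-cone characterization.
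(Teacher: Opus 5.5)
Your proposal is correct and follows essentially the same route as the paper. The paper only cites this result from \cite{ames2020integral}, but its proof of the delayed generalization (Theorem~\ref{thm:tissf}) likewise writes $\dot h = -\alpha(h) - a + \bb^\top \vv$ on the augmented state, uses the imposed constraint to get $\dot h \ge 0$ on $\{h=0\}$, and applies Nagumo's theorem with the regular-value assumption. That is exactly your primary argument. The $\varepsilon$-shift aside is unnecessary, and it would not go through as stated because \eqref{eq:icbf_safety_cond} is only assumed on $\Scal$; committing to the Nagumo argument, with your forward-completeness caveat, is the right call.
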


A minimum-norm safety filter satisfying \eqref{eq:icbf_safety_cond} is given by
\begin{align}
  \qq(\x,\uu) = \argmin_{\vv \in \mathbb{R}^m} \quad & \norm{\vv}^2 \\[-3pt]
              \mathrm{s.t.} \quad & \bb^\top \! (\x,\uu) \vv \geq a(\x,\uu).\nonumber
  \label{eq:icbf_qp}
\end{align}
This QP admits the explicit solution
\begin{equation}
  \qq(\x,\uu) = \begin{cases}
    \tfrac{a(\x,\uu)}{\norm{\bb(\x,\uu)}^2}\,\bb(\x,\uu), & \text{if } a(\x,\uu) > 0, \\
    \mathbf{0}, & \text{if } a(\x,\uu) \leq 0,
  \end{cases}
  \label{eq:icbf_minnorm}
\end{equation}
obtained via the Karush-Kuhn-Tucker (KKT) conditions.

\subsection{State and Input Constraints}
\label{ssec:separable}

In practice, we often need to enforce two \emph{separate} constraints simultaneously: a state constraint and an input constraint. Let the safe sets for the state and input be
\begin{align}
  \Sx &:= \{(\x,\uu)\in\mathbb{R}^n \times \mathbb{R}^m : h_x(\x)\ge 0\},
  \label{eq:Sx}\\
  \Su &:= \{(\x,\uu)\in\mathbb{R}^n \times \mathbb{R}^m : h_u(\uu)\ge 0\},
  \label{eq:Su}
\end{align}
with continuously differentiable ${h_x : \mathbb{R}^n \!\to\! \mathbb{R}}$, ${h_u : \mathbb{R}^m \!\to\! \mathbb{R}}$.
Since $h_x$ depends only on $\x$,
it cannot serve as an ICBF.
Following \cite{ames2020integral}, we form the extended barrier:
\begin{equation}
  h_e(\x,\uu):=
  \frac{\partial h_x}{\partial \x}(\x)\f(\x,\uu)+\alpha_x(h_x(\x)),
  \label{eq:he}
\end{equation}
with $\alpha_x\in\mathcal{K}^{\rm e}$ and
\begin{equation}
  \Se:=\{(\x,\uu)\in\mathbb{R}^n\!\times\!\mathbb{R}^m : h_e(\x,\uu)\ge0\}.
  \label{eq:Se}
\end{equation}
Making $\Se$ forward invariant via $\vv$ guarantees
the forward invariance of ${\Se \cap \Sx}$, enforcing state constraints.

Let $h_e$ and $h_u$ be ICBFs.
Applying \eqref{eq:ab_nodelay} to $h_e$ and $h_u$ yields:
\begin{align}
  \bb_e(\x,\uu) &:= \left ( \tfrac{\partial h_e}{\partial\uu}(\x,\uu) \right)^\top,
  \label{eq:abe}\\
  a_e(\x,\uu)  &:= - \tfrac{\partial h_e}{\partial\x}(\x,\uu)\,\f(\x,\uu)
                         \nonumber\\
  & \;\;\quad - \tfrac{\partial h_e}{\partial\uu}(\x,\uu)\,\phib(\x,\uu)
                         - \alpha_e\big(h_e(\x,\uu)\big),
  \nonumber\\
  \bb_u(\x,\uu)    &:= \left (  \tfrac{\partial h_u}{\partial\uu}(\uu)\right)^\top,
  \label{eq:abu}\\
  a_u(\x,\uu)  &:= -\tfrac{\partial h_u}{\partial \uu}(\uu)\phib(\x,\uu)-\alpha_u(h_u(\uu)).
  \nonumber
\end{align}
One may enforce both constraints on the safety correction $\vv$ simultaneously, for example, by the QP:
\begin{align}
  \qq(\x,\uu) = \argmin_{\vv\in\mathbb{R}^{m}}
    &\quad \norm{\vv}^2
  \label{eq:combined_qp} \\[-3pt]
  \mathrm{s.t.} 
  &\quad \bb_e^\top(\x,\uu)\vv \geq a_e(\x,\uu), \nonumber\\
  &\quad \bb_u^\top(\x,\uu)\vv \geq a_u(\x,\uu). \nonumber
\end{align}
Even if each constraint individually admits a feasible $\vv$, the combined QP \eqref{eq:combined_qp} may be infeasible when both are imposed simultaneously. This naturally raises the question: \emph{when is the two-constraint QP guaranteed to be feasible?}
We address feasibility and forward invariance in Section~\ref{ssec:main_results}.

\section{Integral CBFs with Input Delay}
\label{sec:icbfd}

In this section, we present our main results: we extend the ICBF framework to systems with input delay.
We address joint state and input constraints, and robustness to uncertain prediction. The general case
is stated first, while the cases of perfect prediction
and no delay
follow as special cases.


Consider the system \eqref{eq:plant_nodelay} with a constant input delay $\tau>0$:
\begin{equation}
  \dot{\x}(t) = \f\big(\x(t),\, \uu(t{-}\tau)\big).
  \label{eq:plant_delay}
\end{equation}
We assume \eqref{eq:plant_delay} admits a unique solution $\x(t)$ for $t\ge 0$ given $\x(0)$ and a continuous input history ${\uh(\vartheta):=\uu(t+\vartheta)}$, ${\vartheta\in[-\tau,0)}$.
Due to the delay, the command $\uu(t)$ computed at time $t$ takes effect only at ${t+\tau}$, and during ${[t,\,t+\tau]}$ the system evolves under the history~$\uh$, which cannot be altered retroactively.
Evaluating the safety condition at $\x(t)$ is therefore inadequate for choosing~$\uu(t)$.

Instead, we require that the physical pair $(\x(t),\uu(t{-}\tau))$ -- the current state and the delayed input that is currently acting on the system --
satisfies the safety constraint:
\begin{equation}
    h_i \big( \x(t),\uu(t-\tau) \big) \geq 0,
\end{equation}
where $h_i$ may refer to a single constraint with $h$ or each of the two constraints with $h_e$ and $h_u$. With a slight abuse of notation, a barrier depending on a single variable is identified with its trivial extension to ${\x,\uu}$. 
The delayed system must evolve inside a safe set $\Scal_i$, so that
${(\x(t),\uu(t{-}\tau)) \in \Scal_i}$,
with
\begin{equation}
  \Scal_i := \big\{ (\x,\uu) \in\mathbb{R}^n \times \mathbb{R}^m : h_i(\x,\uu) \ge 0 \big\}.
  \label{eq:safeset_delay}
\end{equation}
We require the following regularity condition for $h_i$.

\begin{assumption} \label{assump:reg_value}
    Any nonpositive number is a regular value of $h_i$, that is,
    ${[\tfrac{\partial h_i}{\partial \x}(\x,\uu),\,\tfrac{\partial h_i}{\partial \uu}(\x,\uu)] \neq \mathbf{0}}$
    for all ${(\x,\uu) \in \mathbb{R}^n \times \mathbb{R}^m}$ satisfying ${h_i(\x,\uu) \leq 0}$.
\end{assumption}

\subsection{Predictor Feedback}
\label{ssec:system_pred}

To overcome the effect of delay,
we employ predictor feedback \cite{molnar2022safety}.
Assuming access to $\x(t)$ and the stored history~$\uh$,
the predicted state ${\xp(t):=\x(t+\tau)}$ is obtained by forward-integrating \eqref{eq:plant_delay} over the delay interval:
\begin{equation}
  \xp \!=\! \Psi(\tau, \x,\uh) \!:=\!
  \x + \!\int_0^\tau\!\! \f\big(\Psi(s, \x,\uh), \uh(s{-}\tau)\big) \mathrm{d}s,
  \label{eq:semiflow}
\end{equation}
where ${\uh(s-\tau)}$ is the known past input.
A key property \cite{Krsticbook2008} is that $\xp$ has delay-free dynamics:
${\dot{\x}_{\rm p}(t) = \f(\xp(t),\uu(t))}$.

The integral controller is designed with the predicted state:
\begin{equation}
  \dot{\uu}(t) = \phib(\xp(t),\,\uu(t)) + \vv(t),
  \label{eq:ctrl_delay}
\end{equation}
where $\phib$ is the integral controller and ${\vv = \qq(\xp,\uu)}$ is the auxiliary safety correction evaluated with the predicted state.
This controller is designed with two goals: ensuring that $\uu$ tracks a desired nominal law through $\phib$, and enforcing safety via $\vv$.
Because $\phib$ acts on the predicted state $\xp$ rather than the current state $\x$, delay compensation is embedded in the nominal dynamics.
For instance, for a static state feedback law ${\uu = \kd(\x)}$, a common choice in the delay setting is:
\begin{equation}
  \phib(\xp, \uu) =
    \frac{\partial \kd}{\partial \x}(\xp)\,\f(\xp,\uu)
    + \frac{\alpha_\phi}{2}\left(\kd(\xp) - \uu\right),
  \label{eq:phi_tracking_delay}
\end{equation}
for ${\alpha_\phi > 0}$, making $\uu$ converge to $\kd(\xp)$ exponentially.


Prediction requires the delay $\tau$ and the dynamics $\f$ to be accurately known.
However, in practice, the exact delay and dynamics are seldom known, which may lead to a {\em prediction error}: a mismatch between the true future state $\xp$ and its estimation $\xph$.
For example, if $\f$ is known but the controller uses a delay estimate $\hat{\tau}$, the estimated predicted state
\begin{equation}
  \xph(t) := \Psi\big(\hat{\tau},\x(t),\,\uh\big)
  \label{eq:pred_hat}
\end{equation}
differs from $\xp(t)=\Psi(\tau,\x(t),\uh)$ whenever $\hat{\tau}\neq\tau$.

While the physical system \eqref{eq:plant_delay} is driven by the input $\uu(t{-}\tau)$ and its dynamics remain undisturbed, the prediction error ${\xph - \xp}$ affects the integral control law in \eqref{eq:ctrl_delay}.
Namely, both the controller $\phib$ and the safety correction $\qq$ are evaluated at the uncertain $\xph$ instead of the true $\xp$:
\begin{equation}
  \dot{\uu}(t) = \phib\big(\xph(t),\,\uu(t)\big) + \qq(\xph(t),\,\uu(t)).
  \label{eq:ctrl_disturbed}
\end{equation}
The effect of the prediction mismatch can be viewed as a matched disturbance propagated into \eqref{eq:ctrl_delay}:
\begin{align}
  \dot{\uu}(t) & = \phib\big(\xp(t),\,\uu(t)\big) + \qq(\xp(t),\,\uu(t)) + \mathbf{d}(t), \\
  \mathbf{d} & := \phib(\xph,\uu) + \qq(\xph,\uu) - \phib(\xp,\uu) - \qq(\xp,\uu).
  \label{eq:disturbance_def}
\end{align}
We assume that the disturbance (i.e., the effect of prediction errors) stays bounded.


\begin{assumption}
\label{assump:delta}
${\norm{\mathbf{d}(t)} \leq \delta}$ for all ${t \geq 0}$, for some ${\delta \ge 0}$.
\end{assumption}

Knowing a bound ${\norm{\xph-\xp} \leq \delta_x}$ on the prediction error, the disturbance bound $\delta$ can be estimated, for example, as ${\delta = (L_\phi+L_\qq) \delta_x}$, where $L_\phi$ and $L_\qq$ are the Lipschitz constants of $\phib$ and $\qq$, respectively.
Notice that when ${\xph = \xp}$ (i.e., $\hat{\tau}=\tau$), we have $\mathbf{d}\equiv\mathbf{0}$ and ${\delta = 0}$.


To provide robust safety even in the presence of disturbance,
we adopt the robust CBF framework of~\cite{nanayakkara2025safety}.
For each barrier $h_i$,
we enforce a robust ICBF condition
\begin{equation} \label{eq:robust_cond}
    \bb_i^\top(\xp,\uu) \vv \geq a_i(\xp,\uu) + r_i(\xp,\uu),
\end{equation}
with the additional robustness term
\begin{equation}
\begin{aligned}
  r_i(\xp,\uu) & := \mu_i \big(h_i(\xp,\uu)\big) \norm{\bb_i(\xp,\uu)} \\
  & \quad + \sigma_i \big(h_i(\xp,\uu)\big) \norm{\bb_i(\xp,\uu)}^2,
  \label{eq:r}
\end{aligned}
\end{equation}
where
${\mu_i,\sigma_i:\mathbb{R} \!\to\! \mathbb{R}_{>0}}$ are continuously differentiable and monotonously decreasing.
A common choice is ${\mu_i(h_i)=\mu_0 e^{-\lambda h_i}}$, ${\sigma_i(h_i)=\sigma_0 e^{-\lambda h_i}}$ with ${\mu_0,\sigma_0,\lambda>0}$.
Using the tunability idea of \cite{Alan2022},
with $\mu_i$ and $\sigma_i$ we concentrate robustness near the safe set boundary (${h_i \approx 0}$), while leaving the controller unchanged far inside the set (${h_i \gg 0}$).

This finally leads to the following QP for ensuring robust safety in the presence of input delay and prediction errors:
\begin{align}
  \qq(\xp,\uu)  = \argmin_{\vv\in\mathbb{R}^{m}}
    &\quad \norm{\vv}^2
  \label{eq:combined_qp_delay} \\
  \mathrm{s.t.}
  &\quad \bb_i^\top(\xp,\uu) \vv \geq a_i(\xp,\uu) + r_i(\xp,\uu), \nonumber   
\end{align}
where the constraint is enforced for all ${i\in\{e,u\}}$ in case of two constraints (with $h_e$, $h_u$), while the index $i$ can be dropped for a single constraint (with $h$).
Next, we establish our main theoretical results underpinning this control design.

\subsection{Feasibility}
\label{ssec:main_results}
\label{def:tissf_icbfd}\label{def:compat_icbfd}\label{def:compat_icbf}

First, we establish feasibility conditions for the proposed controller.
In the case of a single constraint, we require that
there exists
${\alpha\in\mathcal{K}^{\rm e}}$ such that
for all ${(\xp,\uu) \in \mathbb{R}^n \times \mathbb{R}^m}$:
\begin{equation}
  \sup_{\vv\in\mathbb{R}^m}
  \Big[\dot{h}(\xp,\uu,\vv)
  +\alpha\big(h(\xp,\uu)\big)
  -r(\xp,\uu)\Big]\ge 0,
  \label{eq:tissf_cond_def}
\end{equation}
or equivalently:
\begin{equation}
  \bb(\xp,\uu)=\mathbf{0}
  \implies
  a(\xp,\uu)
  \le 0.
  \label{eq:tissf_rd}
\end{equation}
Note that ${r(\xp,\uu)\!=\!0}$ whenever ${\bb(\xp,\uu)\!=\!\mathbf{0}}$ based on \eqref{eq:r}, hence \eqref{eq:tissf_rd} reduces to the standard ICBF condition~\eqref{eq:icbf_rd}.
This implies that any ICBF is automatically a robust ICBF~\cite{nanayakkara2025safety}.

This condition is required for the feasibility of the single-constraint QP.
For the two-constraint case, we derive the feasibility condition
based on a form of Farkas' lemma.

\begin{lemma}[\!\!\cite{molnar2023composing,cohen2025compatibility}]
\label{lem:compat}
There exists $\vv\in\mathbb{R}^m$ satisfying two affine constraints, $\bb_e^\top\vv\ge a_e$ and $\bb_u^\top\vv\ge a_u$, if and only if
\begin{equation}
  \lambda_e\bb_e+\lambda_u\bb_u=\mathbf{0}
  \implies
  \lambda_e a_e+\lambda_u a_u\le 0,
  \label{eq:compat_farkas}
\end{equation}
for all ${\lambda_e,\lambda_u\ge 0}$.
\end{lemma}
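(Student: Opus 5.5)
The plan is to prove both directions directly; only the sufficiency direction needs a construction. This avoids calling a general theorem of the alternative.

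\emph{Necessity.} Suppose some $\vv$ satisfies $\bb_e^\top\vv\ge a_e$ and $\bb_u^\top\vv\ge a_u$. Take any $\lambda_e,\lambda_u\ge 0$ with $\lambda_e\bb_e+\lambda_u\bb_u=\mathbf{0}$. Multiply the two inequalities by the nonnegative multipliers and add them. This gives
\[
\lambda_e a_e+\lambda_u a_u\le(\lambda_e\bb_e+\lambda_u\bb_u)^\top\vv=0,
\]
which is \eqref{eq:compat_farkas}.

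\emph{Sufficiency.} This goes by cases on the geometry of $\bb_e,\bb_u$.

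\textbf{Case 1: $\bb_e,\bb_u$ linearly independent.} The linear map $\vv\mapsto(\bb_e^\top\vv,\bb_u^\top\vv)\in\mathbb{R}^2$ is surjective. Hence some $\vv$ attains $(a_e,a_u)$ with equality, and no condition is needed.

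\textbf{Case 2: one vector vanishes, say $\bb_e=\mathbf{0}$.} Choosing $(\lambda_e,\lambda_u)=(1,0)$ in \eqref{eq:compat_farkas} gives $a_e\le 0$, so the first constraint holds for every $\vv$.
\begin{itemize}
\item If $\bb_u\neq\mathbf{0}$, take $\vv=a_u\bb_u/\norm{\bb_u}^2$.
\item If $\bb_u=\mathbf{0}$ as well, choosing $(0,1)$ gives $a_u\le0$, and $\vv=\mathbf{0}$ works.
\end{itemize}

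\textbf{Case 3: both nonzero and collinear, $\bb_u=c\,\bb_e$ with $c\neq0$.}
\begin{itemize}
\item If $c>0$, the constraints read $\bb_e^\top\vv\ge a_e$ and $\bb_e^\top\vv\ge a_u/c$. Any $\vv=s\,\bb_e$ with $s$ large satisfies both, and again no condition is needed.
\item If $c<0$, the constraints become
\[
a_e\le\bb_e^\top\vv\le a_u/c=-a_u/|c|.
\]
Choose $\lambda_e=|c|$ and $\lambda_u=1$. Then $\lambda_e\bb_e+\lambda_u\bb_u=|c|\bb_e+c\,\bb_e=\mathbf{0}$, so \eqref{eq:compat_farkas} yields $|c|a_e+a_u\le0$, i.e.\ $a_e\le -a_u/|c|$.
\item The interval is therefore nonempty. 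Pick $\vv=s\,\bb_e/\norm{\bb_e}^2$ with $s$ in that interval.
\end{itemize}

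The main obstacle is only the bookkeeping in Case~3, specifically the sign of $c$ and the correct choice of multipliers. Case~3 with $c<0$ is the only situation in which the two constraints genuinely conflict. It corresponds to $\bb_e$ and $\bb_u$ being antiparallel, which is where the later closed-form feasibility condition should come from.

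Alternatively, the result follows at once from the standard Farkas lemma (Motzkin transposition) for the system $B^\top\vv\ge\mathbf{a}$, with $B=[\bb_e\ \bb_u]$ and $\mathbf{a}=(a_e,a_u)$. Since the system has no sign constraint on $\vv$, infeasibility is equivalent to the existence of $\boldsymbol\lambda\ge0$ with $B\boldsymbol\lambda=\mathbf{0}$ and $\boldsymbol\lambda^\top\mathbf{a}>0$.
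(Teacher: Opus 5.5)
Your proof is correct. It takes a different route from the paper, which gives no argument of its own: it cites the lemma as a form of Farkas' lemma, i.e., the theorem of the alternative you mention in your last paragraph.

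Necessity is the trivial half of Farkas either way. For sufficiency you avoid the theorem altogether. You use the fact that there are only two constraints and construct $\vv$ explicitly, splitting by cases on the geometry of $\bb_e,\bb_u$: independent, one vanishing, parallel, anti-parallel. Only the multipliers $(1,0)$, $(0,1)$ and $(|c|,1)$ are needed.

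This makes the argument self-contained and constructive. It also effectively merges the lemma with Proposition~\ref{prop:compat}. With $\bb_u=c\,\bb_e$ and $c<0$, one has $|c|=\norm{\bb_u}/\norm{\bb_e}$. Multiplying your condition $|c|a_e+a_u\le 0$ by $\norm{\bb_e}$ and replacing $a_i$ by $a_i+r_i$ gives exactly \eqref{eq:compatibility_3}. So your case split on $\vv$ does the same work as the case split on $(\lambda_e,\lambda_u)$ that the paper carries out in proving the proposition.

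The Farkas route, in turn, is a one-line argument that holds verbatim for any number of affine constraints. There the cone $\{\lambda\ge 0 : B\lambda=\mathbf{0}\}$ is no longer captured by a parallel/anti-parallel dichotomy, and a direct construction of $\vv$ like yours would not extend easily.
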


Applying Lemma~\ref{lem:compat} to the two-constraint QP
yields the following easy-to-check condition.

\begin{proposition}
\label{prop:compat}
\textit{
The pair $(h_e,h_u)$ is a compatible robust ICBF pair and the QP in \eqref{eq:combined_qp_delay} is feasible if and only if
\begin{subequations}  \label{eq:robust_compat}
\begin{align}
    &\bb_e=\mathbf{0} \implies a_e 
    \le 0, \label{eq:compatibility_1} \\
    &\bb_u=\mathbf{0} \implies a_u 
    \le 0, \label{eq:compatibility_2} \\
    &\norm{\bb_e}\norm{\bb_u}+\bb_e^\top\bb_u=0 \implies \label{eq:compatibility_3} \\
    & \qquad (a_e+r_e)\norm{\bb_u}+(a_u+r_u)\norm{\bb_e}\le 0, \nonumber
\end{align}
\end{subequations}
where all quantities are evaluated at $(\xp,\uu)$.
}
\end{proposition}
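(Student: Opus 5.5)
Throughout I apply Lemma~\ref{lem:compat} with $a_e,a_u$ replaced by $\tilde a_e:=a_e+r_e$ and $\tilde a_u:=a_u+r_u$, since the constraints of \eqref{eq:combined_qp_delay} are $\bb_i^\top\vv\ge \tilde a_i$. By the lemma, feasibility is equivalent to
\[
\lambda_e\bb_e+\lambda_u\bb_u=\mathbf{0} \implies \lambda_e\tilde a_e+\lambda_u\tilde a_u\le 0
\]
for all $\lambda_e,\lambda_u\ge0$. I will show that this Farkas condition is equivalent to \eqref{eq:compatibility_1}--\eqref{eq:compatibility_3}. The approach is to classify the pairs $(\lambda_e,\lambda_u)\in\mathbb{R}^2_{\ge0}$ with $\lambda_e\bb_e+\lambda_u\bb_u=\mathbf{0}$. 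The pair $(0,0)$ is trivial. Because the condition is homogeneous, the remaining pairs can be normalized.

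\emph{Case $\lambda_u=0$, $\lambda_e>0$.} This forces $\bb_e=\mathbf{0}$. Then $r_e=0$ by \eqref{eq:r}, so the requirement is $a_e\le0$, which is exactly \eqref{eq:compatibility_1}. The case $\lambda_e=0$, $\lambda_u>0$ gives \eqref{eq:compatibility_2} in the same way.

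\emph{Case $\lambda_e,\lambda_u>0$ with $\bb_e,\bb_u\neq\mathbf{0}$.} Then $\bb_e=-(\lambda_u/\lambda_e)\bb_u$, so the two vectors are antiparallel. By the equality case of Cauchy--Schwarz this is equivalent to $\norm{\bb_e}\norm{\bb_u}+\bb_e^\top\bb_u=0$ with both vectors nonzero. Taking norms gives $\lambda_e\norm{\bb_e}=\lambda_u\norm{\bb_u}$, so up to positive scaling the only admissible multiplier is $(\lambda_e,\lambda_u)=(\norm{\bb_u},\norm{\bb_e})$. The Farkas requirement then becomes $\tilde a_e\norm{\bb_u}+\tilde a_u\norm{\bb_e}\le0$, which is \eqref{eq:compatibility_3}.

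\emph{Case $\lambda_e,\lambda_u>0$ with exactly one of the vectors zero.} The relation $\lambda_e\bb_e+\lambda_u\bb_u=\mathbf{0}$ forces the other vector to be zero as well. So the only remaining possibility is $\bb_e=\bb_u=\mathbf{0}$. Here every $\lambda_e,\lambda_u\ge0$ is admissible, and since $r_e=r_u=0$ the requirement is $\lambda_ea_e+\lambda_ua_u\le0$. This is equivalent to $a_e\le0$ and $a_u\le0$, i.e., it follows from \eqref{eq:compatibility_1}--\eqref{eq:compatibility_2}.

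The main point to check is the bookkeeping of these degenerate subcases in \eqref{eq:compatibility_3}. When one or both vectors vanish, the hypothesis of \eqref{eq:compatibility_3} also holds, so I must verify that its conclusion is implied by (and does not add anything beyond) \eqref{eq:compatibility_1}--\eqref{eq:compatibility_2}. If $\bb_e=\mathbf{0}\neq\bb_u$, the conclusion of \eqref{eq:compatibility_3} reduces to $a_e\norm{\bb_u}\le0$, which is equivalent to \eqref{eq:compatibility_1} at that point. If $\bb_u=\mathbf{0}\neq\bb_e$, it symmetrically reduces to \eqref{eq:compatibility_2}. If both vanish, it reads $0\le0$. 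Hence \eqref{eq:compatibility_3} is consistent with the other two conditions in every degenerate subcase, and the union of the three conditions is equivalent to the Farkas condition.

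This proves the equivalence with feasibility of \eqref{eq:combined_qp_delay} at each point $(\xp,\uu)$. The "compatible robust ICBF pair" part of the statement I will take as defined by this pointwise feasibility (as in \cite{cohen2025compatibility}). Finally, I note that \eqref{eq:compatibility_1}--\eqref{eq:compatibility_2} are precisely the individual robust ICBF conditions \eqref{eq:tissf_rd} for $h_e$ and $h_u$.
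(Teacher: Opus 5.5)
Your proof is correct and follows essentially the paper's argument: apply Lemma~\ref{lem:compat} with $a_i$ replaced by $a_i+r_i$, then split into the cases where $\bb_e$ or $\bb_u$ vanishes and the anti-parallel case, where $\lambda_u/\lambda_e=\norm{\bb_e}/\norm{\bb_u}$. Indexing the cases by the multipliers rather than by the vector configurations, and explicitly checking that \eqref{eq:compatibility_3} adds nothing beyond \eqref{eq:compatibility_1}--\eqref{eq:compatibility_2} at degenerate points (using $r_i=0$ when $\bb_i=\mathbf{0}$), only makes explicit what the paper leaves implicit.
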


\begin{proof}
We use Lemma~\ref{lem:compat}, by replacing $a_i$ with $a_i+r_i$.
We show that \eqref{eq:compat_farkas} holds for all cases where ${\lambda_e\bb_e+\lambda_u\bb_u=\mathbf{0}}$
with ${\lambda_e,\lambda_u\ge 0}$.
This means $\bb_e$ and $\bb_u$ are anti-parallel or zero (or ${\lambda_e=\lambda_u=0}$, for which \eqref{eq:compat_farkas} holds trivially).

\emph{Case $\bb_e=\mathbf{0}$, $\bb_u\ne\mathbf{0}$:}
${\lambda_e\bb_e+\lambda_u\bb_u=\mathbf{0}}$ holds if $\lambda_u=0$, so \eqref{eq:compat_farkas} requires $\lambda_e(a_e+r_e)\le 0$
for all $\lambda_e\ge 0$, giving~\eqref{eq:compatibility_1}.

\emph{Case $\bb_e\ne\mathbf{0}$, $\bb_u=\mathbf{0}$}
is symmetric and gives~\eqref{eq:compatibility_2}.

\emph{Case $\bb_e=\mathbf{0}$, $\bb_u=\mathbf{0}$} is covered by~\eqref{eq:compatibility_1} and~\eqref{eq:compatibility_2}.

\emph{Anti-parallel case (${\norm{\bb_e}\norm{\bb_u}\!+\!\bb_e^\top\bb_u=0}$, while ${\bb_e\ne\mathbf{0}}$, ${\bb_u\ne\mathbf{0}}$):}
We have ${\bb_e \!=\! -c\,\bb_u}$ with
${c\!=\!\norm{\bb_e}/\norm{\bb_u}>0}$.
So ${\lambda_e\bb_e\!+\!\lambda_u\bb_u=\mathbf{0}}$ gives ${\lambda_u = c \lambda_e}$, and
\eqref{eq:compat_farkas} yields~\eqref{eq:compatibility_3}.
%
\end{proof}

Conditions~\eqref{eq:compatibility_1}--\eqref{eq:compatibility_2} coincide with the individual robust ICBF conditions~\eqref{eq:tissf_rd} for $h_e$ and $h_u$. 
Hence, they are satisfied when $h_e$ and $h_u$ are ICBFs.
However, two ICBFs need not form a compatible ICBF pair: condition~\eqref{eq:compatibility_3}
may fail when $\bb_e$ and $\bb_u$ are anti-parallel.
Furthermore, even if compatibility holds without robustness terms, it may not hold with ${r_e,r_u>0}$: unlike the single-constraint case, a compatible ICBF pair is not automatically a compatible robust ICBF pair.
Condition~\eqref{eq:compatibility_3} must therefore be explicitly verified or enforced through the design of $\phib$.
Tuning $\phib$ directly shapes $a_e$ and $a_u$ and can be used to ensure compatibility.


\subsection{Robust Safety}

Having established feasibility, we now state the formal safety guarantees provided by the proposed controller.
Since the input history over $[0,\tau]$ is already committed and cannot be altered by the safety filter, we require an assumption.
\begin{assumption}
\label{assump:history}
The initial input history satisfies $(\x(\vartheta),\, \uu(\vartheta{-}\tau)) \in \Scal_i$ for all $\vartheta \in [0,\tau]$.
\end{assumption}
This requirement is stronger than $(\x(0),\uu(0))\in\Scal$ for the delay-free case: with delay, the safety filter has no authority over $[0,\tau]$ since the input history is already committed.

Under Assumptions~\ref{assump:reg_value}--\ref{assump:history}, we establish the safety of the time delay system \eqref{eq:plant_delay} with the integral controller~\eqref{eq:ctrl_disturbed} in the presence of prediction errors.
Specifically, we show that the set $\Scal_i$ is forward invariant if the disturbance caused by prediction errors is small (i.e., ${\delta \leq \mu_i(0)}$).
Moreover, even if the disturbance is large (i.e., ${\delta > \mu_i(0)}$), we can show that trajectories remain within a neighborhood of the safe set by using the concept of input-to-state safety~\cite{ames2019issf}.
For each barrier $h_i$, we define the inflated set
\begin{equation}
  \Scal_{i,\delta} := \big\{(\x,\uu) \in \mathbb{R}^n \!\times \mathbb{R}^m :
    h_{i,\delta}(\x,\uu) \ge 0\big\},
  \label{eq:safeset_delta_pair}
\end{equation}
where
\begin{equation}
  h_{i,\delta}(\x,\uu):= h_i(\x,\uu) - \alpha_i^{-1}\!\left(-\frac{(\mu_i(0) - \delta)^2}{4\,\sigma_i(h_i(\x,\uu))}\right).
  \label{eq:safeset_delta_pair_h}
\end{equation}
Note that the set $\Scal_{i,\delta}$ grows with $\delta$ (if ${\delta > \mu(0)}$), while $\delta$ itself grows with the prediction error ${\xph - \xp}$.
A more accurate prediction, therefore, directly reduces conservatism~\cite{molnar2022input}.

The following theorem establishes the forward invariance of $\Scal_i$ and $\Scal_{i,\delta}$ for both the single- and two-constraint cases.

\begin{theorem}
\label{thm:tissf}
Consider the system~\eqref{eq:plant_delay},
the predicted state $\xp$ in \eqref{eq:semiflow},
its estimation $\xph$,
the safe set $\Scal_i$ in \eqref{eq:safeset_delay},
and the inflated set $\Scal_{i,\delta}$ in \eqref{eq:safeset_delta_pair}.
Let Assumptions~\ref{assump:reg_value},~\ref{assump:delta}, and~\ref{assump:history} hold.

If the condition in \eqref {eq:tissf_rd} holds,
then any locally Lipschitz controller ${\vv=\qq(\xp,\uu)}$, with ${\qq : \mathbb{R}^n \!\times\! \mathbb{R}^m \to \mathbb{R}^m}$, satisfying
\begin{equation}
  \bb^\top(\xp,\uu)\vv
  \ge a(\xp,\uu) + r(\xp,\uu)
  \label{eq:tissf_cond}
\end{equation}
for all ${(\xp,\uu)\in\Scal_\delta}$,
renders $\Scal_\delta$ forward invariant for \eqref{eq:plant_delay}--\eqref{eq:ctrl_disturbed}.
Further, if ${\delta \leq \mu(0)}$, then $\Scal$ is also forward invariant,
i.e.,
${(\x(0),\uu(-\tau))\in\Scal \implies (\x(t),\uu(t{-}\tau))\in\Scal}$, ${\forall\,t\ge 0}$.

If the conditions in \eqref{eq:robust_compat} hold,
then any locally Lipschitz controller ${\vv=\qq(\xp,\uu)}$, with ${\qq : \mathbb{R}^n \!\times\! \mathbb{R}^m \to \mathbb{R}^m}$, satisfying
\begin{equation}
  \bb_i^\top(\xp,\uu)\vv
  \ge a_i(\xp,\uu) + r_i(\xp,\uu),
  \quad \forall i\in\{e,u\},
  \label{eq:tissf_cond2}
\end{equation}
for all ${(\xp,\uu)\in \Scal_{e,\delta}\cap\Scal_{u,\delta}}$,
renders $\Scal_{e,\delta}\cap\Scal_{u,\delta}$ forward invariant for \eqref{eq:plant_delay}--\eqref{eq:ctrl_disturbed}.
Further, if ${\delta \leq \min\{\mu_e(0),\mu_u(0)\}}$, then ${\Scal_e\cap\Scal_u}$ is also forward invariant,
i.e.,
$(\x(0),\uu(-\tau))\in\Scal_{e}\cap\Scal_{u} \implies (\x(t),\uu(t{-}\tau)) \in \Scal_{e}\cap\Scal_{u},\ \forall\,t\ge 0$.
\end{theorem}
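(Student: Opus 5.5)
The plan is to move the whole problem to the predicted coordinates, where the delay disappears. Since $\xp(t)=\x(t+\tau)$ and $\xp$ obeys the delay-free dynamics $\dot{\x}_{\rm p}=\f(\xp,\uu)$, the pair $(\xp(t),\uu(t))$ evolves as a delay-free system driven by \eqref{eq:ctrl_disturbed}. By \eqref{eq:disturbance_def} that input law can be written as $\dot{\uu}=\phib(\xp,\uu)+\qq(\xp,\uu)+\mathbf{d}$. Moreover $(\xp(t),\uu(t))=(\x(t+\tau),\uu((t+\tau)-\tau))$, so forward invariance of a set for $(\xp,\uu)$ on $t\ge0$ gives membership of $(\x(s),\uu(s-\tau))$ in that set for $s\ge\tau$. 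Assumption~\ref{assump:history} covers $s\in[0,\tau]$, and in particular gives $(\xp(0),\uu(0))=(\x(\tau),\uu(0))\in\Scal_i$, which serves as the initial condition of the delay-free problem. It therefore suffices to prove forward invariance for the delay-free disturbed system in the variables $(\xp,\uu)$.

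For a single barrier, I would differentiate $h$ along this system and use \eqref{eq:tissf_cond}:
\[
\dot h=-a-\alpha(h)+\bb^\top(\qq+\mathbf{d})\ge -\alpha(h)+r-\norm{\bb}\delta .
\]
Substituting \eqref{eq:r} and completing the square in $\norm{\bb}$ gives
\[
\dot h\ge -\alpha(h)+(\mu(h)-\delta)\norm{\bb}+\sigma(h)\norm{\bb}^2 .
\]
If $\delta\le\mu(0)\le\mu(h)$ for $h\le0$ (using that $\mu$ is decreasing), the extra terms are nonnegative near and below the boundary. Hence $\dot h\ge-\alpha(h)$ wherever $h\le 0$, and the comparison lemma, or Nagumo's theorem with the regular value property from Assumption~\ref{assump:reg_value}, yields invariance of $\Scal$.

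When $\delta>\mu(0)$, the right-hand side above, minimized over $\norm{\bb}\ge0$, gives $\dot h\ge-\alpha(h)-(\mu(h)-\delta)^2/(4\sigma(h))$. Evaluated at $h=0$ this becomes $-(\mu(0)-\delta)^2/(4\sigma(0))$. I would then differentiate $h_\delta$ from \eqref{eq:safeset_delta_pair_h} and show $\dot h_\delta\ge0$ on the boundary $h_\delta=0$, where $h<0$. On that boundary, the definition gives $\alpha(h)=-(\mu(0)-\delta)^2/(4\sigma(h))$, so the bound on $\dot h$ contains the cancelling term needed.

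This is the main obstacle. The derivative of $h_\delta$ carries a factor from the $h$-dependence of $\sigma$ inside $\alpha^{-1}$, and $\mu(h)$ appears in place of $\mu(0)$. For $h<0$ we have $\mu(h)\ge\mu(0)$ but $\delta>\mu(0)$, so the sign of $\mu(h)-\delta$ is unclear. I would first try to show that $1-\frac{d}{dh}\alpha^{-1}(\cdots)>0$, so that $\dot h_\delta$ has the same sign as a positive multiple of $\dot h$ plus the boundary term. I would then handle $(\mu(h)-\delta)^2\le(\mu(0)-\delta)^2$ in the regime where $\mu(h)\le\delta$, and treat the remaining regime by the first argument, since there the extra terms are already nonnegative. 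If this fails, I would fall back to freezing $\sigma$ and $\mu$ at $0$ via monotonicity and invoking the ISSf result of~\cite{ames2019issf} with Nagumo's theorem on $\Scal_\delta$.

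For two barriers, Proposition~\ref{prop:compat} under \eqref{eq:robust_compat} guarantees that controllers satisfying \eqref{eq:tissf_cond2} exist, for instance the QP \eqref{eq:combined_qp_delay}. Each $h_i$, $i\in\{e,u\}$, then satisfies the same differential inequality as in the single-barrier case. Applying the argument above componentwise shows that each $\Scal_{i,\delta}$, respectively $\Scal_i$ when $\delta\le\min\{\mu_e(0),\mu_u(0)\}$, is forward invariant along the same trajectory. Their intersection is therefore forward invariant. The last step is to map the result back to the physical pair $(\x(t),\uu(t-\tau))$ as in the first paragraph.
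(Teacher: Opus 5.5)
Your proposal is correct and follows the paper's proof: the same reduction to the delay-free pair $(\xp,\uu)$ with Assumption~\ref{assump:history} covering $[0,\tau]$, the same inequality $\dot h\ge-\alpha(h)+(\mu(h)-\delta)\norm{\bb}+\sigma(h)\norm{\bb}^2$, Nagumo's theorem at $h=0$ and at $h_\delta=0$, and the same componentwise treatment of $h_e,h_u$. The one variation is at $h_\delta=0$: you case-split on the sign of $\mu(h)-\delta$, whereas the paper writes $\mu(h)-\delta=(\mu(h)-\mu(0))+(\mu(0)-\delta)$ and completes the square only in the second part, which gives $\dot h\ge(\mu(h)-\mu(0))\norm{\bb}\ge0$ directly; your ``main obstacle'' then resolves as you suspected, because $h_\delta$ is a function of $h$ alone with derivative at least $1$ ($\sigma$ decreasing, $\alpha^{-1}$ increasing), which is the fact $\dot h_\delta\ge0\iff\dot h\ge0$ that the paper takes from \cite{Alan2022}.
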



\begin{proof}
We follow~\cite[Appendix]{molnar2022safety},\cite{Alan2022} and give the argument for a single barrier $h$; the two-barrier case follows by applying the same reasoning to $h_e$ and $h_u$.
Shifting time by $\tau$, the predicted pair $(\xp,\uu)$ obeys the delay-free dynamics
${\dot{\x}_{\rm p}(t) = \f(\xp(t),\uu(t))}$,
with ${\dot{\uu} = \phi(\xp,\uu) + \qq(\xp,\uu) + \mathbf{d}}$ based on \eqref{eq:ctrl_disturbed}.
We prove forward invariance using Nagumo's theorem.
First, we compute $\dot{h}$ along the disturbed dynamics
\begin{align}
    & \dot{h} \big( \xp,\uu,\qq(\xp,\uu)+\mathbf{d} \big) \nonumber \\
    & = - \alpha \big( h(\xp,\uu) \big) - a(\xp,\uu) + \bb^\top(\xp,\uu) \big( \qq(\xp,\uu) + \mathbf{d} \big). \nonumber
\end{align}
For ease of notation, we omit the argument of $\dot{h}$ and the arguments $(\xp,\uu)$ from now on.
Substituting~\eqref{eq:tissf_cond}, we get
\begin{align}
    & \dot{h} \geq - \alpha \big( h \big) + r + \bb^\top \mathbf{d}. \nonumber
\end{align}
Then, we substitute $r$ from~\eqref{eq:r}, and we write ${\bb^\top\mathbf{d} \!\ge\! -\norm{\bb}\delta}$ based on the Cauchy--Schwarz inequality and Assumption~\ref{assump:delta}:
\begin{align}
    \dot{h} \geq - \alpha(h) + \big( \mu (h) - \delta \big) \|\bb\| + \sigma(h) \|\bb\|^2. \label{eq:proof_1}
\end{align}
Further, by completing the square, it can be shown that:
\begin{align}
    \dot{h} \geq - \alpha(h) + \big( \mu(h) - \mu(0) \big) \|\bb\| - \frac{(\mu(0)-\delta)^2}{4 \sigma(h)}. \label{eq:proof_2}
\end{align}

We prove the forward invariance of $\Scal$ by considering ${h = 0}$.
Substituting into \eqref{eq:proof_1} and using  ${\delta \leq \mu(0)}$:
\begin{align}
    & \dot{h} \geq \big( \mu(0) - \delta \big) \|\bb\| + \sigma(0) \|\bb\|^2 \geq 0. \nonumber
\end{align}
Since zero is a regular value of $h$ based on Assumption~\ref{assump:reg_value}, Nagumo's theorem yields forward invariance for $\Scal$. 
Similarly, to show the invariance of $\Scal_\delta$, we take ${h_\delta = 0}$.
This implies
${h \leq 0}$ and
${\alpha(h) \!+\! \frac{(\mu(0)-\delta)^2}{4 \sigma(h)} \!=\! 0}$,
thus \eqref{eq:proof_2} yields
\begin{align}
    & \dot{h} \geq \big( \mu(h) - \mu(0) \big) \|\bb\| \geq 0, \nonumber
\end{align}
where ${\mu(h) \geq \mu(0)}$ as $\mu$ is monotonously decreasing.
Based on~\cite[proof of Thm.~3]{Alan2022}, ${\dot{h}_\delta \geq 0 \iff \dot{h} \geq 0}$, and zero is a regular value of $h_\delta$ via Assumption~\ref{assump:reg_value}.
Hence, Nagumo's theorem yields forward invariance for $\Scal_\delta$.
These give safety for ${t\ge\tau}$, while Assumption~\ref{assump:history} covers ${t\in[0,\tau]}$.
\end{proof}




\begin{remark}[Special Cases]
In case of accurate prediction (${\xph = \xp}$),
 one could set ${r_i(\xp,\uu)=0}$,
reducing~\eqref{eq:robust_cond} to the standard ICBF conditions evaluated at the predicted state $\xp$:
\begin{equation}
  \bb_i^\top(\xp,\uu)\vv \ge a_i(\xp,\uu).
  \label{eq:icbfd_safety_cond}
\end{equation}
The corresponding minimum-norm safety filter is:
\begin{align}
  \qq(\xp,\uu) = \argmin_{\vv\in\mathbb{R}^m} &\quad \norm{\vv}^2 \\ 
  \mathrm{s.t.} & \quad 
  \bb_i^\top(\xp,\uu)\vv\ge a_i(\xp,\uu), \nonumber
  \label{eq:icbfd_qp}
\end{align}
cf.~\eqref{eq:combined_qp_delay}.
%
Additionally, in case of zero delay (${\tau=0}$), we have ${\xp=\x}$, 
thus Assumption~\ref{assump:history} holds automatically, and 
\eqref{eq:icbfd_safety_cond}
reduces to the delay-free ICBF condition in 
Section~\ref{sec:background}.
\end{remark}

\section{Adaptive Cruise Control with Delay}
\label{sec:example}

We illustrate the proposed framework on an adaptive cruise control (ACC) system \cite{ames2017cbf}.
Consider an automated vehicle (AV) following a lead vehicle.
The AV's state, $\x=[D~v]^\top$, includes the distance to the lead vehicle, $D$, and the AV's speed, $v$.
With actuation delay $\tau>0$, the dynamics are
\begin{equation}
  \dot{D} = v_{\rm L} - v, \qquad
  \dot{v} = u(t{-}\tau) - p(v),
  \label{eq:acc}
\end{equation}
where $v_{\rm L}$ is the lead vehicle's speed,
$u$ is the commanded acceleration,
and ${p(v)=c_0+c_1 v+c_2 v^2}$ is a resistance term.

\begin{figure}[t]
  \centering
  \includegraphics[width=1\columnwidth]{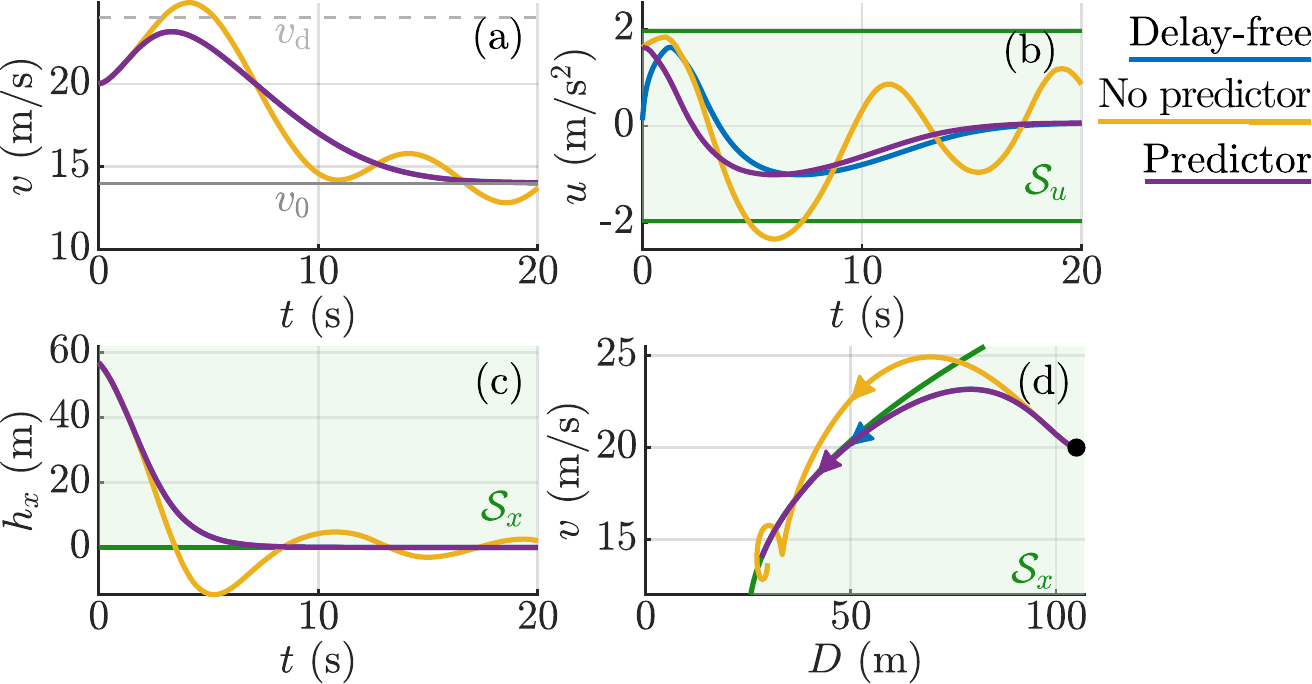}
  \vspace{-5mm}
  \caption{%
    Simulation of ACC, showing the effect of input delay and predictor compensation: (a)~velocity; (b)~acceleration; (c)~CBF $h_x$; (d)~phase portrait. Blue: delay-free reference; yellow: naive controller unaware of input delay; purple: the proposed predictor feedback.
    }
  \label{fig:predictor}
  \vspace{-2mm}
\end{figure}

Our goal is to control the AV such that it maintains a safe distance $D$ while keeping the acceleration command $u$ within safe bounds. 
We use the proposed controller in \eqref{eq:ctrl_disturbed} and \eqref{eq:combined_qp_delay} with
the dynamically defined controller $\phib(\xp,u)$ in \eqref{eq:phi_tracking_delay} and the nominal controller
${k_{\rm d}(\x) = K_v\,(v_{\rm d} - v)}$, which drives $v$ towards a desired speed $v_{\rm d}$ with gain $K_v$.
The safe following distance is encoded by the barrier
\begin{equation}
  h_x(\x) = D - T_{\rm h}\,v - \tfrac{(v_{\rm L}-v)^2}{2\,u_{\max}} - D_{\rm sf},
  \label{eq:acc_hx}
\end{equation}
where $T_{\rm h}$ is the time headway, $u_{\max}$ is the maximum deceleration, and $D_{\rm sf}$ is a safe standstill distance.
This is used to construct the extended barrier $h_e$ in \eqref{eq:he}.
The acceleration constraint ${|u|\le u_{\max}}$ is captured by
\begin{equation}
  h_u(u) = u_{\max}^2 - u^2.
  \label{eq:acc_hu}
\end{equation}

Feasibility of the QP \eqref{eq:combined_qp_delay} is analyzed via Proposition~\ref{prop:compat}. 
Conditions~\eqref{eq:compatibility_1} and~\eqref{eq:compatibility_2} are the individual ICBF conditions for $h_e$ and $h_u$ separately. 
Condition~\eqref{eq:compatibility_3} is nontrivial and requires attention when $b_e$ and $b_u$ have opposite signs, which occurs during braking (${u < 0}$, so ${b_u = -2u > 0}$) when ${b_e = -(T_{\rm h} + (v_{\rm L}-v)/u_{\max}) < 0}$. 
In this regime, the condition~\eqref{eq:compatibility_3} must hold. Without the braking-distance term $(v_{\rm L}-v)^2/(2u_{\max})$ in \eqref{eq:acc_hx}, $b_e = -T_{\rm h}$ would be constant and negative for all $v$, so opposite signs -- and thus condition~\eqref{eq:compatibility_3} -- would need to be checked throughout the entire braking phase with no guarantee of satisfaction. 
The quadratic term introduces a state-dependent contribution to $b_e$ \eqref{eq:acc_hx} that reduces the region where condition~\eqref{eq:compatibility_3} is active, and encodes the physically correct stopping distance.

\begin{figure}[t]
  \centering
  \includegraphics[width=1\columnwidth]{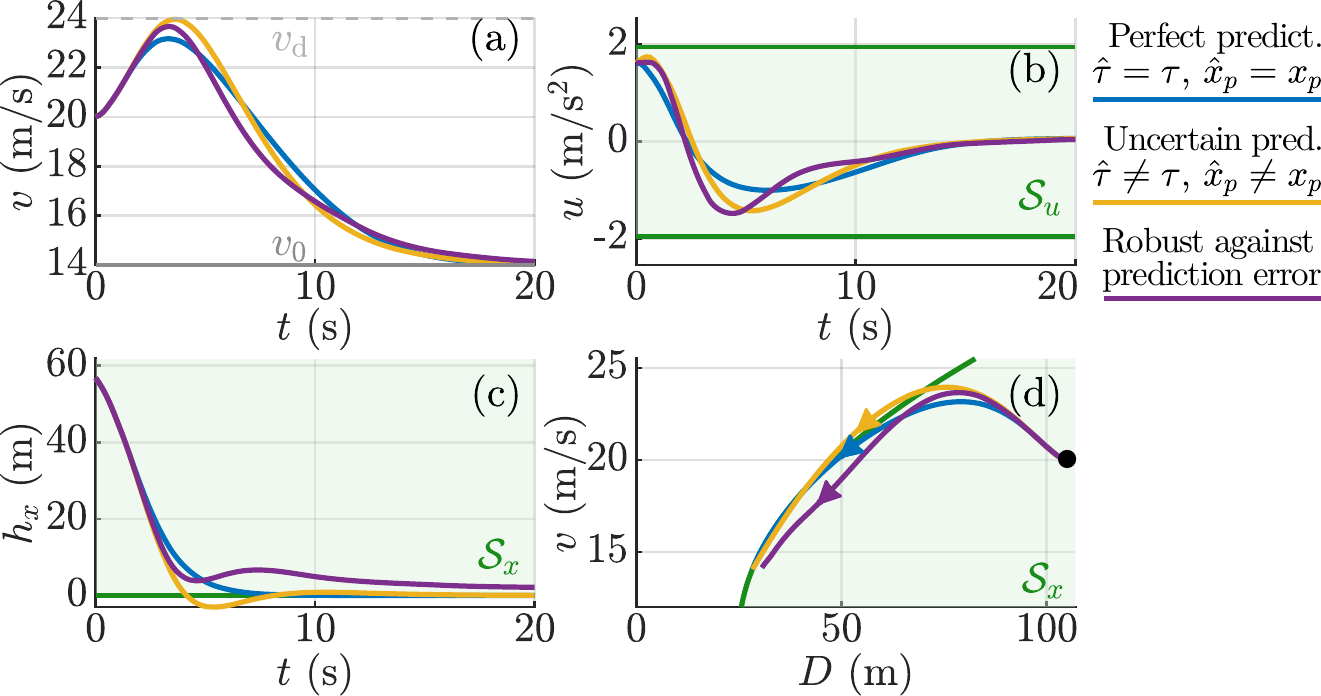}
  \vspace{-5mm}
  \caption{%
    Simulation of ACC, demonstrating robustness to prediction error (with $50\%$ delay underestimation): (a)~velocity; (b)~acceleration; (c)~CBF $h_x$; (d)~phase portrait. Blue: perfect prediction; yellow: uncertain prediction and no robustification; purple: uncertain prediction with robustification. 
  }
  \label{fig:tissf}
  \vspace{-7mm}
\end{figure}

\begin{table}
\begin{center}
\caption{ACC simulation parameters.}
\label{tab:acc_params}
\footnotesize
\begin{tabular}{cccc}
Parameter & Symbol & Value & Unit \\
\hline
\multirow{3}{*}{resistance} 
  & $c_0$ & $6.06\times10^{-5}$ & m/s$^2$ \\
  & $c_1$ & $3.03\times10^{-3}$ & s$^{-1}$ \\
  & $c_2$ & $1.52\times10^{-4}$ & m$^{-1}$ \\
  vel.\ tracking gain      & $K_v$                        & $1$                & s$^{-1}$ \\
integral control gain       & $\alpha_\phi$                & $3$                & s$^{-1}$ \\
\hline
R-CBF linear             & $\mu_{0,e},\,\mu_{0,u}$      & $1.0,\;0.2$        & m/s$^3$ \\
R-CBF quadratic          & $\sigma_{0,e},\,\sigma_{0,u}$& $0.1,\;0.05$       & m/s$^4$;\;s$^{-1}$ \\
R-CBF decay              & $\lambda$                    & $0.05$             & s/m \\
\end{tabular}
\vspace{-8mm}
\end{center}
\end{table}

The two simulation studies below use the ICBFs \eqref{eq:acc_hx}--\eqref{eq:acc_hu} and the QP \eqref{eq:combined_qp_delay}.
In the first case, we show the effect of the delay and the predictor ${\xp=\Psi(\tau,\x,\uh)}$.
In the second case, we highlight what happens when the delay is uncertain, ${\hat\tau\ne\tau}$,  and the predictor ${\xph=\Psi(\hat\tau,\x,\uh)}$ introduces a disturbance.
The second case uses the robust margins~\eqref{eq:r} with ${\mu_i(h_i)=\mu_{0,i} e^{-\lambda h_i}}$, ${\sigma_i(h_i)=\sigma_{0,i} e^{-\lambda h_i}}$, while ${\alpha_i(h)=\gamma_i h_i}$ is applied in the QP \eqref{eq:combined_qp_delay} in each case.
The simulation parameters are ${u_{\max}=1.96\,\rm m/s^2 }$, ${\tau=1.2\,\rm s}$, ${\hat\tau=0.6\,\rm s}$, ${D(0)=105\,\rm m}$, , ${v(0)=20\,\rm m/s}$, ${v_{\rm L}(t) \equiv14\,\rm m/s}$, ${v_{\rm d}=24\,\rm m/s}$, ${\gamma_x=\gamma_e=\gamma_u=1\,\rm s^{-1}}$, ${T_{\rm h}=1.8\,\rm s}$, and ${D_{\rm sf}=3\,\rm m}$; while the remaining parameters are in Table~\ref{tab:acc_params}.

Figure~\ref{fig:predictor} shows the effect of input delay and predictor feedback.
The delay-free case (blue) serves as reference with state and input constraints maintained throughout.
With delay and without predictor (yellow), the ICBF filter is applied naively at the current state $\x$, unaware of the delay.
This controller fails to anticipate the required braking and both constraints are violated.
With predictor feedback (purple), the controller fully compensates the delay: the acceleration in panel~(b) is a shifted copy of the delay-free case, while the velocity and barrier in panels~(a) and (c) recover the delay-free response.
Thus safety is maintained per Theorem~\ref{thm:tissf}.

Figure~\ref{fig:tissf} shows the effect of delay uncertainty, when the delay is underestimated by  $50\%$ (${\hat\tau=\tau/2}$).
With perfect prediction (blue, repeated from Fig.~\ref{fig:predictor}), safety is maintained exactly.
With uncertain prediction and no robustification (yellow), the disturbance $\mathbf{d}$ causes the violation of the state constraint.
As opposed, the proposed robust ICBF filter (purple) intervenes proactively and maintains both constraints during the simulation.

\section{Conclusion}
\label{sec:conclusion}

This paper developed an ICBF framework for the safety-critical control
of nonlinear systems with input delay and joint state and input constraints. Three contributions were presented. 
First, predictor feedback was incorporated so that the safety filter compensates for the delay.
Second, a closed-form feasibility condition was derived for the two-constraint QP arising from simultaneous state and input constraints.
Third, a robust ICBF formulation was proposed to address prediction errors caused by, for example, delay uncertainty.
The framework was validated on an adaptive cruise control example, where delay compensation and robustness to a $50\%$ delay estimation error were demonstrated in simulation.
Future research directions include experimental validation on embedded platforms with online delay estimation.







\bibliographystyle{IEEEtran}
\bibliography{references}	

\end{document}